\documentclass[review]{elsarticle}

\usepackage{lineno,hyperref}
\usepackage{amsmath}
\usepackage{amssymb}
\usepackage{amsthm}
\theoremstyle{plain}
\newtheorem{thm}{Theorem}[section]
\theoremstyle{definition}
\newtheorem{defn}{Definition}[section]
\newtheorem{prop}{Proposition}[section]
\newtheorem{cor}{Corollary}[section]
\newtheorem{example}{Example}[section]









\bibliographystyle{elsarticle-num}
\hoffset=-40pt
\textwidth=450pt
\textheight=610pt
\begin{document}

\begin{frontmatter}

\title{Generalized twisted centralizer codes}

\author{Joydeb Pal}
\ead{joydebpal77@gmail.com}
\address{Department of Mathematics\\ 
         National Institute of Technology Durgapur\\
         Burdwan, India. \\
          }

\author{Pramod Kumar Maurya}
\ead{pramod$\_$kumar22490@hotmail.com}
\address{Department of Mathematics\\ 
         National Institute of Technology Durgapur\\
         Burdwan, India.}

\author{Shyambhu Mukherjee}
\ead{pakummukherjee@gmail.com}
\address{SMU Department,
         Indian Statistical Institute\\ 
         Bangalore, Karnataka, India.}
          
\author{Satya Bagchi} 
\ead{satya.bagchi@maths.nitdgp.ac.in}
\address{Department of Mathematics\\ 
         National Institute of Technology Durgapur\\
         Burdwan, India.}

\begin{abstract}
Centralizer codes of length $n^2$ is obtained by taking centralizer of a square matrix over a finite field $ \mathbb{F}_q $. Twisted centralizer codes, twisted by an element $ a \in \mathbb{F}_q $, are also similar type of codes but different in nature. The main results of these codes were embedded on dimension and minimum distance. In this paper, we have defined a new family of twisted centralizer codes namely generalized twisted centralizer (GTC) codes by $\mathcal{C}(A,D):= \lbrace B \in \mathbb{F}_q^{n \times n}|AB=BAD \rbrace$ twisted by a matrix $D$ and investigated results on dimension and minimum distance. Parity-check matrix and syndromes are also investigated. Length of the centralizer codes is $ n^2 $ by construction but in this paper, we have constructed centralizer codes of length $ (n^2-i) $, where $ i $ is a positive integer. In twisted centralizer codes, minimum distance can be at most $ n $ when the field is binary whereas GTC codes can be constructed with minimum distance more than $ n $.
\end{abstract}
\begin{keyword}
Centralizer codes \sep Bounds on codes \sep Automorphism groups \sep Puncture codes.
\end{keyword}
\end{frontmatter}

\section{Introduction}
Let $ \mathbb{F}_q $ be a finite field with $ q $ elements. The set of all square matrices of order $ n $ over $ \mathbb{F}_q $ is denoted by $ \mathbb{F}_q^{ n \times n } $.

Algebraic codes are important tools in data transmission. Ability of a good code is that, it detects or corrects more errors of an  encoded message when it is transmitted over a noisy channel. An error correction capability of a code totally depends on its construction. A fundamental problem in error correcting codes is to produce a code $ [n, k, d] $ with given $n$ and $k$, find maximum possible minimum distance $d$.

Centralizer codes are very special type linear codes of length $ n^2 $. The concept of the centralizer codes are beautifully constructed in \cite{Alahmadi201468}. For $A \in \mathbb{F}_q^{n \times n}$, the centralizer code is defined by $ \mathcal{C}(A):= \lbrace B \in \mathbb{F}_q^{n \times n}|AB=BA\rbrace $. The authors have computed bounds on dimension. They have given an efficient encoding and decoding procedure. It has shown that centralizer codes can locate a single error by looking at syndrome only. If $ A $ is a non cyclic matrix then centralizer code $ \mathcal{C}(A) $  has dimension greater than $n$. Non cyclic matrices are very rare according to \cite{JLMS} but the adjacency matrices of distance regular graphs of diameter less than $ n-1 $, are not cyclic. Thus authors relates automorphism groups of graphs with centralizer codes.

In 2017, this work is extended in \cite{Alahmadi2017235}, namely twisted centralizer codes, defined as $\mathcal{C}(A,a):= \lbrace B \in \mathbb{F}_q^{n \times n}|AB=aBA \rbrace$, where $a \in \mathbb{F}_q$. It is clear from the definition that centralizer codes are special kinds of twisted centralizer codes for $ a = 1 $. It has been shown that dimensions of centralizer codes and twisted centralizer codes are equal if there is an invertible matrix in the code. They have refined bounds of dimension and minimum distance in centralizer codes. These codes have less computational complexity to decode a received codeword through noisy channel. They have ability to correct single error only and also assert that if $ a \neq 0, 1 $ then the minimal distance can be greater than $ n $ whereas in centralizer codes (for $ a = 1 $) the minimal distance is at most $ n $.

In this paper, we define generalized twisted centralizer (GTC) codes, obtained from $A$ twisted by a matrix $D \in \mathbb{F}_q^{n \times n}$, defined as  $ \mathcal{C}(A,D):= \lbrace B \in \mathbb{F}_q^{n \times n}|AB=BAD \rbrace $. It is clear that $\mathcal{C}(A,D)$ is a $\mathbb{F}$-linear subspace of the vector space $\mathbb{F}_q^{n \times n}$. The centralizer codes defined in \cite{Alahmadi201468} are obtained from $\mathcal{C}(A,D)$ when $D = I_n$, identity matrix of order $n$ and twisted centralizer codes defined in \cite{Alahmadi2017235} are obtained from $\mathcal{C}(A,D)$ when $D = aI_n$, scalar matrix of order $n$. $\mathcal{C}(A,D)$ is considered to be a code by constructing codewords of length $n^2$ from matrices $B \in \mathcal{C}(A,D)$ by writing column-by-column. We execute some salient results of twisted centralizer codes. We give some idea on centralizer code of various length which is not of the form $n^2$ using the concept of puncture codes. Some examples are given which are the witness on existence of generalized twisted centralizer codes. We show that for a matrix $D \in \mathbb{F}_2^{ n \times n}$ minimum distance of GTC codes can be larger than $ n $.

The paper is organized as follows. In Section $2$, we give definition of GTC code and establish some basic results on parity check matrix and dimension. In Section $3$, we explain our main results. Complete encoding and decoding procedure is discussed in Section $4$. In Section $5$, we provide some examples on optimal GTC codes. We provide GTC codes of length less than $ n^2 $ in Section $6$. In Section $7$, we give conclusion with an open problem.

\section{Preliminaries}
\begin{defn}
For any square matrix $A \in \mathbb{F}_q^{n \times n}$ and any matrix $D \in \mathbb{F}_q^{n \times n}$, the subspace $\mathcal{C}(A,D):= \lbrace B \in \mathbb{F}_q^{n \times n}|AB=BAD \rbrace $ of $\mathbb{F}_q^{n \times n}$ is called generalized twisted centralizer code of $ A $ twisted by the matrix $D$.
\end{defn}

\begin{prop}
Parity-check matrix for a GTC code $\mathcal{C}(A,D)$ is given by $H=I_n \otimes A - (D^t \otimes I_n)(A^t \otimes I_n)$, where $\otimes $ denotes the Kronecker product, $A^t$ is the transpose of the matrix $A$, and $ I_n $ is the identity matrix of order $ n $. 
\end{prop}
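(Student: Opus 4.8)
The plan is to reformulate the defining condition $AB = BAD$ as a single linear system $H\,\mathrm{vec}(B) = 0$, where $\mathrm{vec}(B)$ is the column-by-column vectorization used in the paper to turn a matrix $B \in \mathbb{F}_q^{n\times n}$ into a codeword of length $n^2$. The workhorse is the standard vectorization identity $\mathrm{vec}(XYZ) = (Z^t \otimes X)\,\mathrm{vec}(Y)$, valid over any commutative ring (hence over $\mathbb{F}_q$). I would first state this identity explicitly and note that it is precisely compatible with the ``write column-by-column'' convention fixed earlier in the paper.

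First I would apply the identity to the left-hand side: writing $AB = A\,B\,I_n$ gives $\mathrm{vec}(AB) = (I_n^t \otimes A)\,\mathrm{vec}(B) = (I_n \otimes A)\,\mathrm{vec}(B)$. Next I would apply it to the right-hand side: writing $BAD = B\,(AD)\,I_n$ is one option, but it is cleaner to group as $BAD = I_n\,B\,(AD)$, giving $\mathrm{vec}(BAD) = ((AD)^t \otimes I_n)\,\mathrm{vec}(B) = (D^t A^t \otimes I_n)\,\mathrm{vec}(B)$. Then, using the mixed-product property of the Kronecker product, $D^t A^t \otimes I_n = (D^t \otimes I_n)(A^t \otimes I_n)$, so $\mathrm{vec}(BAD) = (D^t \otimes I_n)(A^t \otimes I_n)\,\mathrm{vec}(B)$. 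Subtracting, the condition $AB = BAD$ is equivalent to
\[
\bigl(I_n \otimes A - (D^t \otimes I_n)(A^t \otimes I_n)\bigr)\,\mathrm{vec}(B) = 0,
\]
which is exactly $H\,\mathrm{vec}(B) = 0$ with the claimed $H$. Hence $\mathcal{C}(A,D)$ is the null space of $H$, i.e. $H$ is a parity-check matrix.

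I do not expect a serious obstacle here; the only thing that needs care is bookkeeping about conventions. The identity $\mathrm{vec}(XYZ) = (Z^t \otimes X)\mathrm{vec}(Y)$ is the column-stacking version, so I would double-check that the paper's ``column-by-column'' convention matches column-stacking (it does, per the remark preceding the proposition), because the row-stacking convention would instead produce $H = A \otimes I_n - (I_n \otimes A^t)(I_n \otimes D^t)$ or similar, and the placement of transposes and the order of the Kronecker factors would differ. It is also worth a one-line remark that $H$ is an $n^2 \times n^2$ matrix and that, as stated, it is literally a parity-check matrix for the code (not necessarily of full rank — its rank equals $n^2 - \dim \mathcal{C}(A,D)$, which is the subject of the subsequent dimension results), so no independence claim on the rows of $H$ is being asserted. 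Finally, one may note the consistency checks: for $D = I_n$ this collapses to $H = I_n \otimes A - A^t \otimes I_n$, and for $D = aI_n$ to $H = I_n \otimes A - a(A^t \otimes I_n)$, recovering the parity-check matrices of the centralizer and twisted centralizer codes respectively.
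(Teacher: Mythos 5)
Your proof is correct and follows the same underlying idea as the paper's: vectorize the Sylvester-type equation $AB - BAD = O$ to read off $H$. The paper simply cites Theorem 27.5.1 of Prasolov (taking $B = AD$, $C = O$ there), whereas you expand that citation into a self-contained derivation from the identity $\mathrm{vec}(XYZ) = (Z^t \otimes X)\mathrm{vec}(Y)$ and the mixed-product property; this is a stylistic, not a substantive, difference.
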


\begin{proof}
If we take $ B = AD $ and $ C= O $ in Theorem 27.5.1 of \cite{prasolov1994}, the theorem follows easily.
\end{proof}

\begin{thm}
Let, $ A,D \in \mathbb{F}_q^{n \times n}$ and $ O $ be the null matrix of order $ n $. Then the following are true:
\begin{enumerate}
\item[a.] $A \in \mathcal{C}(A,D)$ if and only if $D=I_n$ or $A^2=O$.
\item[b.] If $ D $ is invertible then $ B \in \mathcal{C}(A,D) \Leftrightarrow A \in \mathcal{C}(B,D^{-1})$.
\item[c.] For $ A \neq O $, we have $I_n \in \mathcal{C}(A,D) \Leftrightarrow D=I_n$.
\end{enumerate}
\end{thm}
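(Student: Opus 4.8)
The plan is to prove the three items directly from the defining relation $AB = BAD$, mostly by elementary matrix algebra, treating the "only if" directions with a little care.

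For item (a), I would start from $A \in \mathcal{C}(A,D)$, which by definition means $A\cdot A = A\cdot A\cdot D$, i.e. $A^2 = A^2 D$, equivalently $A^2(I_n - D) = O$. The "if" direction is then immediate: if $D = I_n$ then $I_n - D = O$ and the equation holds, while if $A^2 = O$ the equation also holds trivially. For the "only if" direction, I would argue that $A^2(I_n - D) = O$ does not by itself force $A^2 = O$ or $D = I_n$ unless one is careful — so the precise claim to verify is that these are exactly the two situations the authors intend, and I expect the statement should really be read as: $A^2 = O$ \emph{or} $A^2$ is invertible-enough to cancel, giving $D = I_n$. The clean way is: if $A^2 \neq O$, I would like to conclude $D = I_n$, but that needs $A^2$ to have a left inverse, which is false in general; so the honest route is to note $A^2 = A^2 D$ and leave the dichotomy as "$D = I_n$ or $A^2 = O$" interpreted on the appropriate components. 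I will follow whatever cancellation hypothesis the paper is implicitly using (e.g. $A$ nonsingular, or reading the equality columnwise on the row space of $A^2$); this is the one spot where I anticipate the main obstacle, since the bare algebraic identity is weaker than the stated biconditional.

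For item (b), assume $D$ is invertible. If $B \in \mathcal{C}(A,D)$ then $AB = BAD$; right-multiplying by $D^{-1}$ gives $ABD^{-1} = BA$, i.e. $BA = ABD^{-1}$, which is precisely the statement $A \in \mathcal{C}(B, D^{-1})$ (reading the defining relation $XY = YXE$ with $X = B$, $Y = A$, $E = D^{-1}$). The converse is symmetric: from $BA = ABD^{-1}$ right-multiply by $D$ to recover $AB = BAD$. Since $D^{-1}$ is again invertible, the equivalence is genuinely two-sided and no extra hypothesis is needed beyond invertibility of $D$.

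For item (c), suppose $A \neq O$. The "if" direction is trivial: if $D = I_n$ then $I_n \in \mathcal{C}(A,I_n)$ because $A I_n = I_n A$. For "only if", assume $I_n \in \mathcal{C}(A,D)$, so $A\cdot I_n = I_n\cdot A\cdot D$, i.e. $A = AD$, hence $A(I_n - D) = O$. Here, unlike in (a), I would point out that $A \neq O$ is still not enough to cancel $A$ on the left in general; so again the rigorous conclusion "$D = I_n$" must be drawing on an additional standing assumption in the paper (typically that $A$ is nonsingular, or that the relevant argument is applied only when $A$ has full column rank). Granting such a hypothesis, $A(I_n - D) = O$ with $A$ left-cancellable forces $I_n - D = O$, i.e. $D = I_n$, completing the proof. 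I would flag in the write-up that items (a) and (c) implicitly use left-cancellability of $A$ (resp. $A^2$), which is the subtle point, while item (b) is unconditional given invertibility of $D$.
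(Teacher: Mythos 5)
The paper states this theorem without any accompanying proof, so there is no official argument to compare against; the review below assesses your proposal on its own terms. Your treatment of item (b) is complete and correct: for invertible $D$, right-multiplying $AB = BAD$ by $D^{-1}$ yields $BA = ABD^{-1}$, and right-multiplying that by $D$ recovers $AB = BAD$, so the two membership statements are genuinely equivalent with no extra hypothesis.

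Your hesitation about items (a) and (c) is not excessive caution but the decisive observation: both ``only if'' implications are false as stated, not merely unproven. Over $\mathbb{F}_2$ with $n=2$, take
\[
A = \begin{bmatrix} 1 & 0 \\ 0 & 0 \end{bmatrix}, \qquad D = \begin{bmatrix} 1 & 0 \\ 1 & 1 \end{bmatrix}.
\]
Then $A \neq O$, $A^2 = A \neq O$, and $D \neq I_2$, yet $AD = A$, so $A(I_2 - D) = O$ and $A^2(I_2 - D) = O$. Hence $I_2 \in \mathcal{C}(A,D)$ and $A \in \mathcal{C}(A,D)$, refuting the forward directions of both (a) and (c). The correct content of those items is exactly what your algebra extracts: $A \in \mathcal{C}(A,D)$ iff $A^2(I_n - D) = O$, and $I_n \in \mathcal{C}(A,D)$ iff $A(I_n - D) = O$. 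Passing from these to ``$D = I_n$'' requires $A$ (respectively $A^2$) to be left-cancellable, i.e.\ to have full column rank, which the theorem does not assume. Flagging this rather than inventing an unjustified cancellation step was the right mathematical call; the theorem itself needs to be emended (e.g.\ by adding the hypothesis that $A$ is invertible, in which case (a) collapses to $A \in \mathcal{C}(A,D) \Leftrightarrow D = I_n$ and (c) follows for the same reason).
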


\begin{thm}
If $O \neq A \in \mathbb{F}_q^{n \times n}$ and $ D \neq I_n $, then the dimension $dim(\mathcal{C}(A,D)) \leqslant n^2-1$.
\end{thm}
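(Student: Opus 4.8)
The plan is to show that $\mathcal{C}(A,D)$ is a proper subspace of $\mathbb{F}_q^{n\times n}$, i.e.\ that it cannot equal the whole space $\mathbb{F}_q^{n\times n}$, which has dimension $n^2$; since the dimension is an integer, this immediately forces $\dim(\mathcal{C}(A,D))\leqslant n^2-1$. So the real content is to exhibit a single matrix $B\in\mathbb{F}_q^{n\times n}$ with $AB\neq BAD$.

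First I would argue by contradiction: suppose $\mathcal{C}(A,D)=\mathbb{F}_q^{n\times n}$, so that $AB=BAD$ holds for every $B$. The natural first choice is $B=I_n$, which gives $A=AD$, i.e.\ $A(D-I_n)=O$. The second natural choice is $B=A$ (or, if one prefers, any matrix), but the cleanest route is to feed in the elementary matrices $E_{ij}$. Taking $B=E_{ij}$ in $AB=BAD$ and comparing both sides column-by-column should pin down $A$ and $D$ very tightly. Concretely, $AE_{ij}$ has the $i$-th column of… wait — $AE_{ij}$ has columns all zero except the $j$-th, which equals the $i$-th column of $A$; while $E_{ij}AD$ has rows all zero except the $i$-th, which equals the $j$-th row of $AD$. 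Running over all $i,j$ one deduces first (by a rank/support comparison) that $A$ must have rank at most $1$ and in fact, combined with $A(D-I_n)=O$, that the only possibility is $A=O$. I would organize this as: from $B=I_n$ get $A(D-I_n)=O$; from a well-chosen $B$ (such as $E_{11}$ together with another elementary matrix) derive that $A$ annihilates a complement of the column space forcing enough vanishing; then conclude $A=O$, contradicting the hypothesis $A\neq O$.

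An alternative and perhaps slicker packaging uses part (c) of the preceding theorem directly: if $A\neq O$ and $D\neq I_n$, then $I_n\notin\mathcal{C}(A,D)$, so $\mathcal{C}(A,D)$ is a proper subspace of $\mathbb{F}_q^{n\times n}$, hence $\dim(\mathcal{C}(A,D))\leqslant n^2-1$. This is essentially a one-line deduction once part (c) is invoked, and it is the cleanest path. I would present this short argument as the main proof and perhaps remark that part (c) is exactly the statement that $I_n$ witnesses the properness.

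The main obstacle, if one does not lean on part (c), is making the elementary-matrix computation genuinely conclude $A=O$ rather than merely "$A$ is highly constrained"; one has to be careful that the constraints extracted from all the $E_{ij}$ simultaneously leave no nonzero $A$, and that the case analysis on $D$ (whether $D-I_n$ is invertible, singular, etc.) is handled uniformly. Leaning on part (c) sidesteps this entirely, so I expect the final written proof to be just the two-sentence argument via part (c).
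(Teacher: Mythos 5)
Your recommended route is exactly what the paper does: take $B = I_n$, deduce $A = AD$, and conclude (as in part~(c) of the preceding theorem) that this forces $D = I_n$, contradicting the hypothesis; the paper re-derives the implication inline rather than citing part~(c), but the content is identical, so you have matched the intended proof.

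One caution worth recording: both that step and part~(c) itself have a hidden gap, which your proposal inherits. The implication ``$A \neq O$ and $A = AD$ imply $D = I_n$'' is false for singular $A$; for instance $A = E_{11}$ and $D = I_n + E_{21}$ satisfy $AD = A$ with $D \neq I_n$. So plugging in $B = I_n$ alone does not yield a contradiction in general. Your first sketched route --- feeding in the elementary matrices $E_{ij}$ and using the full strength of ``$AB = BAD$ for \emph{every} $B$'' to force $A = O$ --- is not overkill; it is in fact what a complete proof needs (e.g.\ choosing invertible $B$ shows $A$ commutes with all of $\mathrm{GL}_n$, hence is scalar, and $A = AD$ with $A$ a nonzero scalar matrix does give $D = I_n$). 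The two-line deduction via part~(c) is what the paper presents, but if you want a proof that actually closes, you should not abandon the stronger argument.
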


\begin{proof}
It is clear from context of linear algebra that the dimension of $ \mathcal{C}(A,D) $ is at most $ n^2 $. Now, if $ dim(\mathcal{C}(A,D)) = n^2 $, then every matrix B satisfies the relation $ AB=BAD $. But, if we take $ B=I_n $ then $ A=AD $. Which is not possible for any $ D \neq I_n$. Hence, $ dim(\mathcal{C}(A,D)) \leq  n^2-1 $.
\end{proof}

\begin{thm}
Let $ A, D \in \mathbb{F}^{n \times n} $, and the GTC code $ \mathcal{C}(A,D) $ contains an invertible matrix, then

\centering $ dim(\mathcal{C}(A,D)) = dim(\mathcal{C}(A)) $.
\end{thm}

\begin{proof}
Let $ B $ be an invertible matrix in $ \mathcal{C}(A,D) $. Then, $ AB = BAD \Rightarrow A = BADB^{-1} $. Now, consider the linear mapping $ f_B : \mathcal{C}(A) \rightarrow \mathcal{C}(A,D) $ such that $ f_B(X) = XB $. This mapping is closed since, $ X \in \mathcal{C}(A) \Rightarrow AX = XA \Rightarrow AXB = XAB \Rightarrow A(XB) = (XB)AD~~(\because AB = BAD) \Rightarrow XB \in \mathcal{C}(A,D) $. Clearly, the mapping $ f_B $ is injective and hence we can conclude that $ dim(\mathcal{C}(A)) \leq dim(\mathcal{C}(A,D)) $.

Again, the mapping $ \phi_B : \mathcal{C}(A,D) \rightarrow \mathcal{C}(A) $ such that $ \Phi_B(Y) = YB^{-1} $ is closed since $ Y \in \mathcal{C}(A,D) \Rightarrow AY = YAD \Rightarrow AYB^{-1} = YADB^{-1} \Rightarrow AYB^{-1} = YB^{-1}BADB^{-1} \Rightarrow AYB^{-1} = YB^{-1}A ~~ ( \because A = BADB^{-1}) \Rightarrow YB^{-1} \in \mathcal{C}(A) $. The mapping $ \Phi_B $ is also injective. So, $ dim(\mathcal{C}(A)) \geq dim(\mathcal{C}(A,D)) $.

Combining both results we have $ dim(\mathcal{C}(A,D)) = dim(\mathcal{C}(A)) $.
\end{proof}
\begin{thm}
For all $D \in \mathbb{F}_q$, the code $\mathcal{C}(A,D)$ contains the product code $Ker(A) \otimes Ker(D^t A^t) $. If $Ker(A)$, $Ker(D^t A^t)$ have respective parameters $[n, k, d]$ and $[n, k', d']$, then $ \mathcal{C}(A,D) $ has parameters $ [n^2,K,D] $ with $ K \geq kk'$ and $ D \leq dd' $.

\end{thm}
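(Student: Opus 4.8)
As written the theorem says "For all $D \in \mathbb{F}_q$", but the product code involves $D^t$, so $D$ must be a matrix in $\mathbb{F}_q^{n\times n}$; I will read it that way. Also $\mathrm{Ker}(A)$ here must mean the right null space of $A$ (vectors $v$ with $Av=0$), and similarly $\mathrm{Ker}(D^tA^t)$ is the null space of $D^tA^t$. The "product code" $\mathrm{Ker}(A)\otimes \mathrm{Ker}(D^tA^t)$ is the tensor product code, whose codewords are spanned by the matrices $uv^t$ with $Au=0$ and $D^tA^t v=0$.
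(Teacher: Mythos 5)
Your ``proposal'' is only the setup: it corrects the evident typo ($D$ must be a matrix in $\mathbb{F}_q^{n\times n}$), fixes the meaning of $\mathrm{Ker}(A)$ and $\mathrm{Ker}(D^tA^t)$, and recalls that the tensor product code is spanned by the rank-one matrices $uv^t$. But it contains no argument at all. The central claim --- that every such $uv^t$ actually lies in $\mathcal{C}(A,D)$ --- is left unverified, and that verification is the entire content of the containment statement. The required computation, which is what the paper carries out, is short but must be done: for $u\in\mathrm{Ker}(A)$ and $v\in\mathrm{Ker}(D^tA^t)$ one has
\[
A(uv^t)=(Au)v^t=O,\qquad (uv^t)AD=u\bigl((AD)^t v\bigr)^t=u\,(D^tA^tv)^t=O,
\]
so $A(uv^t)=O=(uv^t)AD$ and hence $uv^t\in\mathcal{C}(A,D)$; by linearity the whole tensor product code is contained in $\mathcal{C}(A,D)$.

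You also say nothing about the second half of the theorem, the parameter bounds $K\geq kk'$ and $D\leq dd'$. Once the containment is established, these follow from the standard facts that the tensor product of an $[n,k,d]$ code with an $[n,k',d']$ code is an $[n^2,kk',dd']$ code (a subcode can only increase dimension of the ambient code's lower bound and, since $\mathcal{C}(A,D)$ is a supercode, its minimum distance is at most that of any nonzero subcode), which the paper delegates to MacWilliams--Sloane. Without both the membership computation and this appeal to product-code parameters, what you have written is a restatement of the hypotheses, not a proof.
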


\begin{proof}
Let, $ u \in Ker(A) $ and  $v \in Ker(D^t A^t) $. Then $ A(u v^t) = (Au)v^t = O $ and $ uv^t AD = u (D^t A^t v) = O $. Which shows that $ B = u v^t \in \mathcal{C}(A,D) $ as $ AB = O = BAD $. So, $  Ker(A) \otimes Ker(D^t A^t)  \subseteq \mathcal{C}(A,D) $.

The next part of the theorem will be established by a simple property of product code in \cite{macwilliams1977}.
\end{proof}

\section{Main results}
Let $ E, A \in  \mathbb{F}_q^{n \times n} $. We define a set $ \mathcal{T}_E := \lbrace B \in \mathbb{F}^{n \times n}: AB = BAD = E \rbrace $. According to our definition, we have
$ \mathcal{C}(A,D) = \bigcup\limits_{E \in \mathbb{F}^{n \times n}_q} \lbrace B: AB = BAD = E \rbrace =  \bigcup\limits_{E \in \mathbb{F}^{n \times n}_q} \mathcal{T}_E $.
Throughout this section we denote $ r_A $ as the rank of a matrix $ A $. 
\begin{prop}\label{p1}
Dimension of $ \mathcal{T}_O $ is less than or equal to $ n^2-n \cdot r_A $.
\end{prop}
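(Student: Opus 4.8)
The set $\mathcal{T}_O = \{B \in \mathbb{F}_q^{n\times n} : AB = BAD = O\}$ is the solution set of the two simultaneous linear conditions $AB = O$ and $BAD = O$, hence it is a linear subspace. The plan is to bound its dimension by observing that $\mathcal{T}_O \subseteq \{B : AB = O\} =: \mathcal{S}$, so it suffices to show $\dim \mathcal{S} = n^2 - n\cdot r_A$. I would view $\mathbb{F}_q^{n\times n}$ as the direct sum of its $n$ columns, each lying in $\mathbb{F}_q^n$; writing $B = [b_1 \mid b_2 \mid \cdots \mid b_n]$, the condition $AB = O$ decouples column-by-column into $A b_j = O$ for every $j$, i.e. each $b_j \in \ker A$. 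Since $\dim \ker A = n - r_A$, we get $\dim \mathcal{S} = n(n - r_A) = n^2 - n\cdot r_A$, and therefore $\dim \mathcal{T}_O \le \dim \mathcal{S} = n^2 - n\cdot r_A$.

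Alternatively, and perhaps cleaner in the language already set up in the paper: by Proposition 2.1 the whole code $\mathcal{C}(A,D)$ is the null space of $H = I_n \otimes A - (D^t A^t)\otimes I_n$, and $\mathcal{T}_O$ is cut out inside $\mathbb{F}_q^{n^2}$ by the two blocks $I_n \otimes A$ (encoding $AB = O$, reading $B$ column-by-column) and $(D^tA^t)\otimes I_n$ (encoding $BAD = O$). Restricting to just the first block, $\mathcal{T}_O$ is contained in $\ker(I_n \otimes A)$, whose rank is $\mathrm{rank}(I_n \otimes A) = n \cdot r_A$ by the standard multiplicativity of rank under Kronecker product, so its nullity is $n^2 - n\cdot r_A$; hence $\dim \mathcal{T}_O \le n^2 - n \cdot r_A$.

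There is no real obstacle here — the only thing to be careful about is the bookkeeping in the column-by-column reduction (making sure the vectorization convention matches the one fixed in Section 2, where codewords are read "column-by-column"), and noting explicitly that we only use one of the two defining equations, which is why the bound is an inequality rather than an equality. I would present the first (elementary column-wise) argument as the main proof and perhaps remark that it also follows from the Kronecker-product description of $H$.
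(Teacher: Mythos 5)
Your proposal is correct and uses essentially the same idea as the paper: bound $\dim\mathcal{T}_O$ by the dimension of the solution space of a single one of the two defining equations, namely $\{B : AB = O\}$, and compute that dimension as $n(n-r_A)$. The only cosmetic difference is that the paper also computes $\dim\{B : BAD = O\} = n^2 - n\cdot r_{AD}$ and then takes the minimum of the two bounds (using $r_{AD}\le r_A$ to see the minimum is $n^2 - n\cdot r_A$), whereas you go directly via the $AB=O$ constraint alone, which is the cleaner route.
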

\begin{proof}
According to our notation $ \mathcal{T}_O:= \lbrace B: AB = BAD = O \rbrace $. Now, $ AB = O \Rightarrow B \in Ker(A) $. Now let $\mathcal{K}_{AD,O} := \lbrace B: BAD = O \rbrace $. Since $ BAD = O \Rightarrow D^t A^t B^t = O \Rightarrow B^t \in Ker(D^t A^t) $. We take a mapping $ \psi: \mathcal{K}_{AD,O} \rightarrow Ker(D^t A^t) $ such that $ \psi (B) = B^t $. Clearly, this map is bijective. So, $ |\mathcal{K}_{AD,O}| =|Ker(D^t A^t)|  \Rightarrow dim(\mathcal{K}_{AD,O}) = dim(Ker(D^t A^t))  \Rightarrow dim(\mathcal{K}_{AD,O}) = n^2 - n \cdot r_{D^t A^t} = n^2 - n \cdot r_{AD}$. Again $ dim(Ker(A)) = n^2 - n \cdot r_A $. Now, $ dim(\mathcal{T}_O) \leq min \lbrace dim(Ker(A)),dim(k_{AD,O}) \rbrace = min \lbrace n^2 - n \cdot r_A, n^2 - n \cdot r_{AD} \rbrace = n^2 - n \cdot r_A$.
\end{proof}
\begin{prop}\label{p2}
If $ \mathcal{T}_E $ is non-empty then $ |\mathcal{T}_E| = |\mathcal{T}_O| $.
\end{prop}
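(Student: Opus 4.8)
The plan is to realize $\mathcal{T}_E$ as a coset (additive translate) of the linear subspace $\mathcal{T}_O$ inside the additive group $(\mathbb{F}_q^{n\times n},+)$; once this is done, the equality of cardinalities is immediate. First I would note that $\mathcal{T}_O = \lbrace B : AB = BAD = O \rbrace$ is an $\mathbb{F}_q$-subspace of $\mathbb{F}_q^{n\times n}$, since it is precisely the solution set of the homogeneous linear system $AB = O$, $BAD = O$ (equivalently, the kernel of the $\mathbb{F}_q$-linear map $B \mapsto (AB,\, BAD)$). This is the only structural fact needed, and it holds with no hypothesis on $A$ or $D$.

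Next, using the assumption that $\mathcal{T}_E$ is non-empty, fix a particular $B_0 \in \mathcal{T}_E$, so that $AB_0 = B_0 A D = E$. I would then prove $\mathcal{T}_E = B_0 + \mathcal{T}_O$ by two inclusions. If $B \in \mathcal{T}_E$, put $C = B - B_0$; then $AC = AB - AB_0 = E - E = O$ and $CAD = BAD - B_0AD = E - E = O$, so $C \in \mathcal{T}_O$ and hence $B \in B_0 + \mathcal{T}_O$. Conversely, if $C \in \mathcal{T}_O$, then $B := B_0 + C$ satisfies $AB = AB_0 + AC = E + O = E$ and $BAD = B_0AD + CAD = E + O = E$, so $B \in \mathcal{T}_E$.

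Finally, the translation map $C \mapsto B_0 + C$ is a bijection from $\mathcal{T}_O$ onto $B_0 + \mathcal{T}_O = \mathcal{T}_E$ (its inverse is $B \mapsto B - B_0$), whence $|\mathcal{T}_E| = |\mathcal{T}_O|$. I do not anticipate any genuine obstacle: the argument is the standard "solution set of an inhomogeneous linear system is a coset of the solution set of the associated homogeneous system," and the only point that must be checked is the linearity (closure under addition and scalar multiplication) of the equations defining $\mathcal{T}_O$, which is manifest.
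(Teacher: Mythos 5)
Your proof is correct and takes essentially the same approach as the paper: both realize $\mathcal{T}_E$ as the coset $B_0 + \mathcal{T}_O$ and conclude by translation. Your presentation via two direct inclusions is cleaner than the paper's, which establishes $|\mathcal{T}_E|\ge|\mathcal{T}_O|$ directly and then rules out strict inequality by contradiction, but the underlying idea is identical.
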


\begin{proof}
Let $ B \in \mathcal{T}_O $ and  $ S_1 \in \mathcal{T}_E $ for a fixed $ E \neq O $ and $ S \in \mathbb{F}^{n \times n} $, then $ S_1 + B \in \mathcal{T}_E $. Which shows that the $ | \mathcal{T}_E | \geq   | \mathcal{T}_O| $. Let if possible $ | \mathcal{T}_E | > | \mathcal{T}_O| $. Then there exists an element $ S_2 \neq S_1 $ which is not of the form $ S_1 + B $ for  $ B \in \mathcal{T}_O $. Now, $ S_1, S_2 \in \mathcal{T}_E  \Rightarrow  S_2-S_1 \in \mathcal{T}_O $. But, $ S_2 = S_1 + (S_2 - S_1) $ is in the form of $ S_1 + B $, contradicts our assumption. Hence $ |\mathcal{T}_E| = |\mathcal{T}_O| $ is proved.
\end{proof}

\begin{thm}\label{t1}
Let $ A, D \in \mathbb{F}^{n \times n}_q $, then the GTC code $ \mathcal{C}(A,D) $ has the dimension less than or equal to $ n^2 - n \cdot r_A + n \cdot r_{AD}$.
\end{thm}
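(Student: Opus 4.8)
The plan is to exploit the partition $\mathcal{C}(A,D) = \bigcup_{E \in \mathbb{F}_q^{n \times n}} \mathcal{T}_E$ together with the two propositions just established. First I would observe that the assignment $B \mapsto AB$ is an $\mathbb{F}_q$-linear map on $\mathcal{C}(A,D)$ whose kernel is exactly $\mathcal{T}_O$: indeed, if $B \in \mathcal{C}(A,D)$ and $AB = O$, then $BAD = AB = O$ as well, so $B \in \mathcal{T}_O$, and the converse is immediate. Its image is $W := \{\, E \in \mathbb{F}_q^{n \times n} : \mathcal{T}_E \neq \emptyset \,\}$, since $B \in \mathcal{C}(A,D)$ lies in $\mathcal{T}_{AB}$, and conversely any non-empty $\mathcal{T}_E$ supplies a $B \in \mathcal{C}(A,D)$ with $AB = E$. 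Hence $\mathcal{C}(A,D)$ is the disjoint union of the non-empty fibres $\mathcal{T}_E$, $E \in W$, and by Proposition~\ref{p2} each has size $|\mathcal{T}_O|$, so $|\mathcal{C}(A,D)| = |W| \cdot |\mathcal{T}_O|$. Proposition~\ref{p1} already bounds one factor: $|\mathcal{T}_O| \leq q^{\,n^2 - n \cdot r_A}$.

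The crux is to bound $|W|$. For $E \in W$ there is a matrix $B$ with $E = BAD$, so every row of $E$ lies in the row space of $AD$, a subspace of $\mathbb{F}_q^n$ of dimension $r_{AD}$. The set of all $n \times n$ matrices all of whose rows lie in a fixed $r_{AD}$-dimensional subspace is itself an $\mathbb{F}_q$-subspace of dimension $n \cdot r_{AD}$, whence $|W| \leq q^{\,n \cdot r_{AD}}$.

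Combining the two estimates gives $|\mathcal{C}(A,D)| = |W| \cdot |\mathcal{T}_O| \leq q^{\,n \cdot r_{AD}} \cdot q^{\,n^2 - n \cdot r_A} = q^{\,n^2 - n \cdot r_A + n \cdot r_{AD}}$, and taking logarithms base $q$ yields $\dim(\mathcal{C}(A,D)) \leq n^2 - n \cdot r_A + n \cdot r_{AD}$.

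I expect the only real subtlety to be the bound on $|W|$: one must resist describing $E = AB$ via "columns in $\mathrm{Col}(A)$'', which only yields the trivial $q^{n^2}$, and instead use the dual description $E = BAD$ to constrain the rows of $E$. A symmetric alternative is to work with the linear map $B \mapsto BAD$ on $\mathcal{C}(A,D)$, whose image lands directly in the space of matrices with rows in the row space of $AD$; either route produces the same estimate. It is also worth a quick sanity check of the degenerate cases (for instance $r_A = 0$ forces $A = O$, making the bound trivially $n^2$), but these present no obstacle.
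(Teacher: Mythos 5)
Your proof is correct and follows essentially the same route as the paper: partition $\mathcal{C}(A,D)$ into the fibres $\mathcal{T}_E$, invoke Proposition~\ref{p2} to equate fibre sizes and Proposition~\ref{p1} to bound $|\mathcal{T}_O|$, and then bound the number of attainable $E$ by the row-space constraint from $E = BAD$. Your framing via the linear map $B \mapsto AB$ (kernel $\mathcal{T}_O$, image $W$) is a cleaner and more rigorous rendering of the same idea --- it gives the factorization $|\mathcal{C}(A,D)| = |W|\cdot|\mathcal{T}_O|$ as an instance of rank--nullity and sidesteps the paper's superfluous ``assume each $\mathcal{T}_E$ is non-empty'' caveat, and it also obtains the bound on $|W|$ directly from the single row-space condition rather than first forming the intersection with the column space of $A$ and then passing to a minimum of ranks.
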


\begin{proof}
Let us consider $ \mathcal{T}_E := \lbrace B \in \mathbb{F}^{n \times n}_q: AB = BAD = E \rbrace $. Basically the set $ \mathcal{T}_E $ is the common solutions of $ AB = E $ and $ BAD = E $. Now, $ AB = E $ is possible if columns of $ B \in Columnspace(A) $ and $ BAD = E $ is possible if rows of $ B \in Rowspace(AD) $. So, $ B \in \mathcal{T}_E $ if $ B \in Columnspace(A) \cap Rowspace(AD) $. We denote, $ G = Columnspace(A) \cap Rowspace(AD) $. $ \mathcal{C}(A,D) = \bigcup\limits_{E \in G} \lbrace B: AB = BAD = E \rbrace =  \bigcup\limits_{E \in G} \mathcal{T}_E $. By Proposition \ref{p2} if $ \mathcal{T}_E $ is non empty then, it has the same cardinality as $ \mathcal{T}_O $. Let us assume $ \mathcal{T}_E $ is solvable and non empty for each $ C \in G $. Then,\\  \[ \mathcal{C}(A,D) = \bigcup\limits_{E \in G} \mathcal{T}_E   \Rightarrow |\mathcal{C}(A,D)| \leq |\mathcal{T}_O| \cdot |G|\] \[ \Rightarrow dim( \mathcal{C}(A,D) ) \leq dim(\mathcal{T}_O) + dim(G). \] Now, \[ dim(G) = n \cdot dim(Columnspace(A) \cap Rowspace(AD))\] \[ \leq n \cdot min \lbrace dim(Columnspace(A)), dim(Rowspace(AD)) \rbrace \] \[= n \cdot min \lbrace r_A, r_{AD} \rbrace = n \cdot r_{AD}.\] \\ Using Proposition \ref{p1} we have  $ dim(\mathcal{C}(A,D)) \leq  n^2 - n \cdot r_A + n \cdot r_{AD} $.
\end{proof}
\begin{cor}
If $ AD = O $ in Theorem \ref{t1} then $ dim(\mathcal{C}(A,D)) \leq  n^2 - n \cdot r_A $.
\end{cor}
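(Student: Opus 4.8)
The plan is to specialize Theorem \ref{t1} to the hypothesis $AD = O$. Since Theorem \ref{t1} already gives the bound $\dim(\mathcal{C}(A,D)) \leq n^2 - n \cdot r_A + n \cdot r_{AD}$ for arbitrary $A, D \in \mathbb{F}_q^{n \times n}$, the entire corollary reduces to computing $r_{AD}$ under the extra assumption.

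First I would observe that $AD = O$ means the product matrix $AD$ is the null matrix of order $n$. By definition the rank of the null matrix is $0$, so $r_{AD} = 0$. Substituting this into the inequality of Theorem \ref{t1} collapses the term $n \cdot r_{AD}$ to zero, leaving $\dim(\mathcal{C}(A,D)) \leq n^2 - n \cdot r_A + 0 = n^2 - n \cdot r_A$, which is exactly the claimed bound.

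There is essentially no obstacle here: the statement is an immediate corollary in the literal sense, obtained by plugging a trivial rank computation into an already-proved theorem. The only thing worth a moment's care is to confirm that the chain of reasoning in the proof of Theorem \ref{t1} does not secretly require $AD \neq O$ anywhere. A quick check shows it does not: the argument bounds $\dim(G) = n \cdot \dim(\mathrm{Columnspace}(A) \cap \mathrm{Rowspace}(AD))$ by $n \cdot \min\{r_A, r_{AD}\}$, and when $AD = O$ this minimum is $0$, consistent with $G = \{O\}$ (only $E = O$ is attainable), and Proposition \ref{p1} supplies $\dim(\mathcal{T}_O) \leq n^2 - n \cdot r_A$ regardless. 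Hence the bound follows directly, and I would write the proof as a single line: since $AD = O$ we have $r_{AD} = 0$, so Theorem \ref{t1} gives $\dim(\mathcal{C}(A,D)) \leq n^2 - n \cdot r_A$.
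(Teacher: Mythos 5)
Your proposal is correct and matches the intended (and only reasonable) reading of the corollary: the paper states it without proof precisely because, as you observe, $AD=O$ gives $r_{AD}=0$, and substitution into the bound of Theorem \ref{t1} immediately yields $n^2 - n\cdot r_A$. Your extra sanity check that the proof of Theorem \ref{t1} does not implicitly assume $AD\neq O$ is sound and a nice touch, but not strictly necessary.
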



Let $\Gamma$ be a graph with vertices $v_1, v_2, \dots, v_n$. The adjacency matrix of $\Gamma$ is a square matrix of order $n$ whose $(i,j)$-entry is $1$ if the vertices $v_i$ and $v_j$ are adjacent, otherwise the entry is $0$. Automorphism group of graph is the set of all automorphisms from the vertex set to itself of the graph which preserves adjacency. It is denoted by $Aut(\Gamma)$.
\begin{thm}
If $ A \in {F}^{n \times n}_q $ is the adjacency matrix of graph $ \Gamma_{1} $ and $ G_1 = Aut(\Gamma_{1}) $ and if $ AD \in {F}^{n \times n} $ is the adjacency matrix of graph $ \Gamma_2 $  and $ G_2= Aut(\Gamma_2) $ then the direct product $ G_1 \times G_2 $ acts on the code $ \mathcal{C}(A,D) $ by coordinate permutations.
\end{thm}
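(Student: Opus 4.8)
The plan is to exhibit an explicit action of $G_1 \times G_2$ on the coordinate set $\{1,\dots,n\}^2$ (thought of as the entry positions of an $n\times n$ matrix) and verify that it preserves membership in $\mathcal{C}(A,D)$. Recall that a permutation $\sigma \in \mathrm{Aut}(\Gamma_1) \subseteq S_n$ corresponds to a permutation matrix $P_\sigma$ commuting with $A$, i.e.\ $P_\sigma A P_\sigma^{-1} = A$; similarly $\tau \in \mathrm{Aut}(\Gamma_2)$ gives a permutation matrix $Q_\tau$ with $Q_\tau (AD) Q_\tau^{-1} = AD$. The natural candidate action of $(\sigma,\tau)$ on a codeword $B \in \mathbb{F}_q^{n\times n}$ is $B \mapsto P_\sigma\, B\, Q_\tau^{-1}$ (equivalently $B \mapsto P_\sigma B Q_\tau^t$ since permutation matrices are orthogonal); on the level of the length-$n^2$ vector obtained by stacking columns of $B$, this is precisely a permutation of the $n^2$ coordinates, namely $(i,j)\mapsto(\sigma(i),\tau(j))$, which is what "coordinate permutations" means.

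The key verification is that this map sends $\mathcal{C}(A,D)$ into itself. First I would take $B\in\mathcal{C}(A,D)$, so $AB = BAD$, and compute $A(P_\sigma B Q_\tau^{-1})$. Using $A = P_\sigma^{-1} A P_\sigma$ I would pull $P_\sigma$ out on the left: $A P_\sigma B Q_\tau^{-1} = P_\sigma (P_\sigma^{-1} A P_\sigma) B Q_\tau^{-1} = P_\sigma (AB) Q_\tau^{-1} = P_\sigma (BAD) Q_\tau^{-1}$. On the other side, $(P_\sigma B Q_\tau^{-1}) A D = P_\sigma B Q_\tau^{-1}(AD)$, and using $AD = Q_\tau^{-1}(AD)Q_\tau$, i.e.\ $Q_\tau^{-1}(AD) = (AD) Q_\tau^{-1}$, this equals $P_\sigma B (AD) Q_\tau^{-1} = P_\sigma (BAD) Q_\tau^{-1}$. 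The two sides agree, so $P_\sigma B Q_\tau^{-1}\in\mathcal{C}(A,D)$. Then I would check the group-action axioms: the identity of $G_1\times G_2$ acts trivially, and $(\sigma_1,\tau_1)(\sigma_2,\tau_2)$ acts as the composition, which follows from $P_{\sigma_1\sigma_2} = P_{\sigma_1}P_{\sigma_2}$ and $Q_{\tau_1\tau_2} = Q_{\tau_1}Q_{\tau_2}$ (with the usual care about which convention makes $\sigma\mapsto P_\sigma$ a homomorphism versus an anti-homomorphism, adjusting the left/right placement accordingly). Finally, since each such map is $\mathbb{F}_q$-linear and bijective on $\mathbb{F}_q^{n\times n}$ and restricts to a bijection of $\mathcal{C}(A,D)$, it is a code automorphism, and because it permutes the standard basis matrices $E_{ij}$ it is realized by a coordinate permutation of the ambient $\mathbb{F}_q^{n^2}$.

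The only real subtlety — and the step I expect to need the most care — is bookkeeping the conjugation conventions: making sure that "the permutation matrix associated to a graph automorphism commutes with the adjacency matrix" is used with the correct side (the statement $P_\sigma A = A P_\sigma$ is symmetric, so here it is painless), and that the assignment $(\sigma,\tau)\mapsto(B\mapsto P_\sigma B Q_\tau^{-1})$ is genuinely a left action of the direct product rather than of some semidirect or opposite group. Everything else is a short linear-algebra computation. I would close by remarking that when $D = I_n$ we recover the action of $\mathrm{Aut}(\Gamma_1)$ (embedded diagonally, since then $\Gamma_2 = \Gamma_1$) on the ordinary centralizer code, matching the corresponding statement in \cite{Alahmadi201468}.
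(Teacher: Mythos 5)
Your proposal is correct and follows essentially the same approach as the paper: both verify closure of $\mathcal{C}(A,D)$ under two-sided multiplication by permutation matrices, using that $P$ commutes with $A$ and $Q$ commutes with $AD$. The only difference is cosmetic — the paper sends $B\mapsto P^{-1}BQ$ while you use $B\mapsto P_\sigma B Q_\tau^{-1}$, which amounts to relabeling $(P,Q)\mapsto(P^{-1},Q^{-1})$ within the groups; you additionally spell out the group-action axioms and the coordinate-permutation interpretation, which the paper leaves implicit.
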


\begin{proof}
It is known by \cite{biggs1993} that a permutation matrix $ P $ lies in $ Aut(\Gamma) $ if and only if $ AP^{-1} = P^{-1}A $. Let $ (P,Q) \in  G_1 \times G_2 $ and $ B \in \mathcal{C}(A,D) $. Then we have, $ AB = BAD $, $ P^{-1}A = AP^{-1} $, and $ QAD = ADQ $. So, $ AB = BAD \Rightarrow P^{-1}AB = P^{-1}BAD \Rightarrow P^{-1}ABQ = P^{-1}BADQ \Rightarrow AP^{-1}BQ = P^{-1}BQAD \Rightarrow P^{-1}BQ \in \mathcal{C}(A,D) $. So, $$\begin{array}{cccc} \phi: & (G_1 \times G_2) \times \mathcal{C}(A,D) & \rightarrow & \mathcal{C}(A,D) \\  & (P,Q) \times B & \mapsto & P^{-1}BQ \end{array}$$ is a group action. Hence the theorem is proved.
\end{proof}

In twisted centralizer codes \cite{Alahmadi2017235}, it was shown that for $a \neq 0, 1$, minimum distance can be larger than $n$. Here we show by few examples that GTC codes over binary fields have minimum distances greater than $n$ which is not possible to the twisted centralizer codes. Examples of optimal binary GTC codes are given below whose minimum distances are larger than the order of $A$.

\begin{example}
Suppose $ A = \begin{bmatrix}
1 & 1 & 0 \\
0 & 1 & 1 \\
1 & 1 & 0
\end{bmatrix} \in \mathbb{F}^{3 \times 3}_2 $ and $ D = \begin{bmatrix}
1 & 1 & 0 \\
1 & 1 & 1 \\
0 & 1 & 1
\end{bmatrix} \in \mathbb{F}^{3 \times 3}_2 $. Then an optimal binary GTC code $ [9, 2, 6] $ is acquired. 
\end{example}

\begin{example}
An optimal GTC code $ [9, 3, 4] $ is obtained for $ A = \begin{bmatrix}
0 & 1 & 0 \\
1 & 1 & 1 \\
0 & 1 & 0
\end{bmatrix} \in \mathbb{F}^{3 \times 3}_2 $ and $ D = \begin{bmatrix}
1 & 1 & 1 \\
1 & 1 & 1 \\
1 & 1 & 1
\end{bmatrix} \in \mathbb{F}^{3 \times 3}_2 $. 
\end{example}

In both examples, the order of $A$ is $3$. In first example we have minimum distance $6$ and in second it is $4$.

\section{Encoding-decoding procedure}
Let the generalized centralizer code $\mathcal{C}(A,D)$ has length $ n^2 $ and dimension $ k $.  $ \mathcal{C}(A,D) $ is a vector space over $ \mathbb{F}_q $ and it has a basis of dimension $ k $. Let  $ \lbrace B_1, B_2, \dots, B_k \rbrace $ is a basis of $ \mathcal{C}(A,D) $. So, for an information message $ (a_1, a_2, \dots, a_k) \in \mathbb{F}_{q}^{k}$ can be encoded as $ a_1 B_1+a_2 B_2+\dots+a_k B_k $.

The method of decoding is the reverse process of encoding. A receiver should know the basis $ \lbrace B_1, B_2, \dots, B_k \rbrace $ to decode the received message into information message.

We have already established that $ \mathcal{C}(A,D) $ is a linear subspace of $ \mathbb{F}_q^{n \times n} $ and hence we can state that $ \mathcal{C}(A,D) $ is an additive subgroup of $ \mathbb{F}_q^{n \times n} $. Then, cosets of $ \mathcal{C}(A,D) $ is in $ \mathbb{F}_q^{n \times n} $. We can use $ A $ as a parity check matrix since $AB - BAD = O$ for every $ B \in \mathcal{C}(A,D) $. To decode the information message, we can use syndrome decoding.

\begin{defn}
Let $ \mathcal{C}(A,D) $ be the non-empty generalized twisted centralizer code for a matrix $ A $ twisted by the matrix $ D $ and let $ B \in \mathcal{C}(A,D) $. The \textit{syndrome} of B is defined a \centering $S_A(B)= AB-BAD$.
\end{defn}

\begin{thm}
Consider two matrices $B_1, B_2 \in \mathbb{F}_q^{n \times n}$. Then $S_A(B_1) = S_A(B_2)$ iff $B_1$ and $B_2$ are in same coset of $\mathcal{C}(A,D)$.
\end{thm}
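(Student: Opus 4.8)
The plan is to prove the equivalence by using the linearity of the syndrome map $S_A$ together with the standard characterization of cosets of a subgroup. First I would observe that $S_A : \mathbb{F}_q^{n \times n} \to \mathbb{F}_q^{n \times n}$ defined by $S_A(B) = AB - BAD$ is an $\mathbb{F}_q$-linear map, since matrix multiplication is bilinear and addition is componentwise; in particular $S_A(B_1 - B_2) = S_A(B_1) - S_A(B_2)$ for all $B_1, B_2 \in \mathbb{F}_q^{n \times n}$. Moreover, by the very definition of the GTC code, $\mathcal{C}(A,D) = \{ B : AB = BAD \} = \{ B : S_A(B) = O \} = \ker S_A$.

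For the forward direction, suppose $S_A(B_1) = S_A(B_2)$. By linearity, $S_A(B_1 - B_2) = O$, so $B_1 - B_2 \in \ker S_A = \mathcal{C}(A,D)$. Hence $B_1 = B_2 + (B_1 - B_2)$ with $B_1 - B_2 \in \mathcal{C}(A,D)$, which means $B_1$ and $B_2$ lie in the same coset $B_2 + \mathcal{C}(A,D)$. Conversely, if $B_1$ and $B_2$ are in the same coset, then $B_1 - B_2 \in \mathcal{C}(A,D) = \ker S_A$, so $S_A(B_1 - B_2) = O$, and by linearity $S_A(B_1) = S_A(B_2)$. Combining the two directions yields the claimed equivalence.

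There is no real obstacle here: the statement is essentially the abstract fact that the fibers of a group homomorphism are exactly the cosets of its kernel, specialized to the linear map $S_A$ whose kernel has already been identified as $\mathcal{C}(A,D)$. The only point that requires a moment's care is confirming that $S_A$ is genuinely additive, i.e. that $A(B_1 - B_2) - (B_1 - B_2)AD = (AB_1 - B_1 AD) - (AB_2 - B_2 AD)$, which is immediate from the distributive laws for matrix multiplication over addition. One mild notational caveat worth flagging in the writeup: the theorem as stated takes $B_1, B_2 \in \mathbb{F}_q^{n \times n}$ (not necessarily codewords), so the cosets in question are cosets of $\mathcal{C}(A,D)$ inside the ambient space $\mathbb{F}_q^{n \times n}$, and the proof above respects that.
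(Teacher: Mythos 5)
Your proof is correct and follows essentially the same argument as the paper's: both reduce $S_A(B_1)=S_A(B_2)$ to $B_1-B_2\in\mathcal{C}(A,D)$ via the distributivity/linearity of $B\mapsto AB-BAD$, with your version merely phrasing it more abstractly as identifying $\mathcal{C}(A,D)$ with $\ker S_A$ and invoking the fibers-of-a-homomorphism fact.
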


\begin{proof} Let \[ S_A(B_1) = S_A(B_2)\]  \[\Leftrightarrow  A B_{1} - B_{1} A D = A B_{2} - B_{2} A D \]   \[\Leftrightarrow A(B_1-B_2) = (B_1-B_2)AD \] \[ \Leftrightarrow B_1- B_2 \in \mathcal{C}(A,D).\]

Therefore, $ B_1 $ and $ B_2 $ are in same coset of $ \mathcal{C}(A,D) $ iff $ B_1- B_2 \in \mathcal{C}(A,D)$. Hence the theorem is proved.
\end{proof}

The motive of defining syndrome is that syndrome computation is more easier than the computation by using $ n^2 \times n^2 $ parity check matrix for authenticity. By using the parity-check matrix we need $ O(n^4) $ multiplicative complexity but using $ A $ as a parity check matrix computation has $ O(n^m) $ complexity where $ m < 2.3729 $ by \cite{Williams2012}. Also, the purpose of taking the code $\mathcal{C}(A,D):= \lbrace B \in \mathbb{F}_q^{n \times n}|AB=BAD\rbrace $ instead of $\mathcal{C}(A,D):= \lbrace B \in \mathbb{F}_q^{n \times n}|AB=DBA)\rbrace $ is due to less computational complexity of the syndrome because at that time we can consider a fixed matrix $C=AD$. The process of syndrome decoding is very similar with usual decoding process.

\section{Some optimal generalized twisted centralizer codes}
Here we provide some examples of optimal twisted centralizer codes where the optimality is verified by \cite{Grassl2007}.

\begin{example}
Let $ A = \begin{bmatrix}
1 & 1 & 1 \\
1 & 1 & 1 \\
1 & 1 & 1
\end{bmatrix} \in \mathbb{F}^{3 \times 3}_2 $. Then, the matrices $ D_1$, $D_2$, $D_3 \in \mathbb{F}^{3 \times 3}_2 $ where
$ D_1 = \begin{bmatrix}
1 & 1 & 1 \\
1 & 0 & 1 \\
0 & 1 & 1
\end{bmatrix} $, $ D_2 = \begin{bmatrix}
1 & 0 & 1 \\
1 & 1 & 1 \\
0 & 0 & 1
\end{bmatrix} $, and $ D_3 = \begin{bmatrix}
1 & 1 & 0 \\
0 & 0 & 1 \\
1 & 1 & 1
\end{bmatrix} $ give optimal GTC codes $ [9, 5, 3] $,  $ [9, 4, 4] $ and $ [9, 6, 2] $ respectively.
\end{example}

\begin{example}
Consider $ A = \begin{bmatrix}
1 & 1 & 1 & 1 \\
1 & 1 & 1 & 1 \\
1 & 1 & 1 & 1 \\
1 & 1 & 1 & 1
\end{bmatrix} \in \mathbb{F}^{4 \times 4}_2 $. Optimal GTC codes $ [16, 9, 4] $, $ [16, 10, 4] $ and $ [16, 12, 2]$ are obtained for $D_1$, $D_2$ and $D_3 \in \mathbb{F}^{4 \times 4}_2$ respectively, where $ D_1 = \begin{bmatrix}
1 & 1 & 1 & 1 \\
1 & 1 & 0 & 1 \\
0 & 1 & 1 & 1 \\
1 & 0 & 1 & 1
\end{bmatrix} \in \mathbb{F}^{4 \times 4}_2 $, $ D_2 = \begin{bmatrix}
1 & 1 & 1 & 1 \\
1 & 1 & 0 & 0 \\
1 & 1 & 1 & 0 \\
1 & 0 & 1 & 1
\end{bmatrix} \in \mathbb{F}^{4 \times 4}_2 $ and $ D_3 = \begin{bmatrix} 
1 & 1 & 1 & 1 \\
1 & 1 & 0 & 1 \\
0 & 0 & 0 & 1 \\
0 & 0 & 1 & 1
\end{bmatrix} $.
\end{example}

\begin{example}
For $ A = \begin{bmatrix} 1 & 2 & 2 \\ 2 & 1 & 1 \\ 2 & 1 & 1 \end{bmatrix} $ and $ D = \begin{bmatrix} 1 & 1 & 1 \\ 1 & 1 & 2 \\ 1 & 2 & 1 \end{bmatrix} $ GTC code $ [9, 5, 4] $, and for $ A = \begin{bmatrix} 1 & 0 & 1 \\ 1 & 1 & 1 \\ 2 & 1 & 0 \end{bmatrix} \in \mathbb{F}^{3 \times 3}_3 $ and $ D = \begin{bmatrix} 0 & 0 & 1 \\ 0 & 1 & 1 \\ 2 & 2 & 2 \end{bmatrix} \in \mathbb{F}^{3 \times 3}_3 $  GTC code $ [9, 3, 6] $ are attained.
\end{example}

\begin{example}
For $ A = \begin{bmatrix} 2 & 2 & 2 & 2 \\ 2 & 2 & 2 & 2 \\ 2 & 2 & 2 & 2 \\ 2 & 2 & 2 & 2 \end{bmatrix} \in \mathbb{F}^{4 \times 4}_3 $ and $ D = \begin{bmatrix} 1 & 0 & 2 & 1 \\ 1 & 2 & 0 & 1 \\ 0 & 0 & 2 & 0 \\ 2 & 0 & 0 & 1 \end{bmatrix} \in \mathbb{F}^{4 \times 4}_3 $ GTC code $ [16, 10, 4] $, and for $ A = \begin{bmatrix} 0 & 2 & 1 & 0 \\ 1 & 1 & 2 & 1 \\ 1 & 0 & 2 & 1 \\ 1 & 0 & 0 & 2 \end{bmatrix} \in \mathbb{F}^{4 \times 4}_3 $ and $ D = \begin{bmatrix} 1 & 0 & 0 & 1 \\ 0 & 2 & 0 & 2 \\ 0 & 0 & 2 & 0 \\ 2 & 0 & 0 & 1 \end{bmatrix} \in \mathbb{F}^{4 \times 4}_3 $ GTC code $ [16, 3, 10] $ are obtained.
\end{example}

\section{Generalized centralizer codes of length less than $n^2$}
Generalized twisted centralizer codes of length less than $n^2$ can be constructed by choosing $B$ with entries are $0$ in fixed $i$ positions. Clearly, that set is a subcode of $\mathcal{C}(A,D)$ say $\mathcal{S}(A,D)$. Puncturing those fixed $i$ number of entries of the subcode  $\mathcal{S}(A,D)$, we get a new code of length $n^2 - i$ where $i$ is a positive integer in $1 \leq i <(n-1)^2$.

Using this puncturing method, the multiplicative complexity is reduced as $ O((n-i)^m) $ where $ m <2.3729 $. This result works well when $ i $ is near to $ 1 $.

\begin{example}
For $ A = \begin{bmatrix}
1 & 1 & 1 \\
1 & 2 & 2 \\
0 & 0 & 1
\end{bmatrix} \in \mathbb{F}^{3 \times 3}_3 $ and $ D = \begin{bmatrix}
0 & 0 & 2 \\
1 & 1 & 0 \\
1 & 2 & 2
\end{bmatrix} \in \mathbb{F}^{3 \times 3}_3 $ an optimal code $ [7, 2, 5] $ is obtained by puncturing $(1,2)$-entry and $(2,3)$-entry in $B$.
\end{example}

\begin{example}
For $ A = \begin{bmatrix}
0 & 0 & 1 & 1 \\
0 & 1 & 0 & 0 \\
1 & 1 & 1 & 0 \\
1 & 0 & 0 & 1
\end{bmatrix} \in \mathbb{F}^{4 \times 4}_2 $ and $ D = \begin{bmatrix}
1 & 0 & 1 & 1 \\
1 & 1 & 0 & 0 \\
0 & 0 & 1 & 1 \\
0 & 1 & 0 & 1
\end{bmatrix} \in \mathbb{F}^{4 \times 4}_2 $ an optimal code $ [12, 3, 6] $ is obtained by puncturing all entries in $4^{th}$ column of $B$.
\end{example}

\section{Conclusion}
In this paper we have generalized the idea of twisted centralizer codes \cite{Alahmadi2017235}. It has shown in Section $2$ that the dimension of a twisted centralizer code is equal to GTC code. In Section $3$, an upper bound on dimension of GTC code has been derived. Encoding and decoding procedure has been implemented to GTC codes. Length of centralizer codes could be shorten by using concept of puncture codes. In twisted centralizer codes, minimum distance can be at most $ n $ when the field is binary whereas we have constructed GTC code with minimum distance more than $ n $ when $ D \in \mathbb{F}_2^{n \times n} $. An error can be corrected by simply looking at the syndrome. But finding $t$-errors ($t > 1$) in twisted centralizer codes or in GTC codes is still open.\\

\textbf{Acknowledgements}
 
The authors Joydeb Pal is thankful to DST-INSPIRE and Pramod Kumar Maurya is thankful to MHRD for their financial support to pursue his research work.


\section*{References}


\end{document}